\newtheorem{theorem}{Theorem}
\def\ket#1{| #1 \rangle}
\def\bra#1{\langle #1 |}
\begin{document}

\title{A universal duplication-free quantum neural network}

\author{Xiaokai Hou}
    \affiliation{Institute of  Fundamental and Frontier Sciences, University of Electronic Science and Technology of China, Chengdu, Sichuan, 610051, China}

\author{Guanyu Zhou}
    \email{zhoug@uestc.edu.cn}
    \affiliation{Institute of  Fundamental and Frontier Sciences, University of Electronic Science and Technology of China, Chengdu, Sichuan, 610051, China}

\author{Qingyu Li}
\affiliation{Institute of  Fundamental and Frontier Sciences, University of Electronic Science and Technology of China, Chengdu, Sichuan, 610051, China}

\author{Shan Jin}
\affiliation{Institute of  Fundamental and Frontier Sciences, University of Electronic Science and Technology of China, Chengdu, Sichuan, 610051, China}

\author{Xiaoting Wang}
\email{xiaoting@uestc.edu.cn}
\affiliation{Institute of  Fundamental and Frontier Sciences, University of Electronic Science and Technology of China, Chengdu, Sichuan, 610051, China}
\maketitle

\begin{abstract}
Universality of neural networks describes the ability to approximate arbitrary function, and is a key ingredient to keep the method effective. The established models for universal quantum neural networks(QNN), however, require the preparation of multiple copies of the same quantum state to generate the nonlinearity, with the copy number increasing significantly for highly oscillating functions, resulting in a huge demand for a large-scale quantum processor. To address this problem, we propose a new QNN model that harbors universality without the need of multiple state-duplications, and is more likely to get implemented on near-term devices. To demonstrate the effectiveness, we compare our proposal with two popular QNN models in solving typical supervised learning problems. We find that our model requires significantly fewer qubits and it outperforms the other two in terms of accuracy and relative error. 
\end{abstract}

\section{Introduction}
As an important subfield in machine learning(ML), neural networks(NNs), especially deep NNs, have generated a series of impactful results in many application scenarios~\cite{kalchbrenner2014convolutional,mikolov2011extensions,bojarski2016end,bishop2006pattern}. One of the most striking features of NNs is their ability to learn the hidden patterns of a given data set and to make reliable predictions based on these patterns~\cite{goodfellow2016deep}. Such feature originates from the capability to approximate any continuous function, and is known as the universality. Most NNs proposed in literature are proved to be universal and such results are called universal approximation theorems~\cite{citeulike:3561150,hornik1991approximation}. Due to the power of quantum computation, the idea of quantum machine learning(QML) is proposed to implement ML on quantum circuits, in order to achieve computational advantage compared to the classical counterparts. Such advantage has been shown for many QML algorithms~\cite{biamonte2017quantum}, including quantum support vector machine~\cite{rebentrost2014quantum}, $k$-means clustering~\cite{lloyd2013quantum}, quantum principle component analysis~\cite{lloyd2014quantum}, quantum data-fitting algorithm~\cite{wiebe2012quantum} and quantum Boltzmann machine~\cite{PhysRevX.8.021050}, but not for QNNs. It is shown that certain QNNs have a distinctive prediction advantage on certain designed data sets~\cite{Huang_2021}, but less is known for the general case. In fact, research on the advantage of QNNs is still in progress, partly due to the reason that even the complexity of classical NN algorithms has not been addressed without controversy.

Besides the quantum advantage issue, universality is also crucial to keep QNNs effective. The universality of classical NNs is determined by the nonlinearity of neurons. When it comes to the QNNs, how to generate nonlinearity is one of the biggest impediments to achieve the universal QNN. To address the problem, different QNN models have been proposed such as the continuous variable quantum neural network~\cite{killoran2019continuous}, the quantum neuron~\cite{ventura1998artificial,cao2017quantum}, the circuit-centric quantum classifiers algorithm~\cite{schuld2020circuit} and the quantum circuit learning algorithm~\cite{mitarai2018quantum}, but not all of them have been rigorously proved to be universal. In some of these proposals, nonlinearity relies on using multiple copies of the quantum data, resulting in a rapid increase of the size of the quantum register. In order to solve this problem, in this work, we propose a duplication-free QNN structure which also guarantees the universality of the neural network.

 In this work, we aim to construct a universal QNN model without the need of multiple duplications of quantum data. We design the duplication-free quantum neural network (DQNN) whose nonlinearity is generated by the classical sigmoid function. We further compare the DQNN with two well-known QNN models, the circuit-centric quantum classifiers (CCQ) algorithm~\cite{schuld2020circuit} and the quantum circuit learning (QCL) algorithm~\cite{mitarai2018quantum} in terms of the circuit complexity and the performance on the supervised learning tasks. The results show that the DQNN with fewer qubits outperforms the other two in terms of accuracy and relative error. Besides that, the DQNN has the ability to find the complexity pattern hidden in the real-world data sets and the quantum phase recognition (QPR) task.


\section{DQNN and its structure} 
The universality of DQNN refers to the ability to learn a target function, $f$, hidden in a given data set $D=\{(\boldsymbol{x}_i,y_i)\}^m_{i=1}$ where $\bm{x}_i\in G \subset \mathbb{R}^d$ and $y_i$ is determined by the target function with the data noise, $y_i=f(\bm{x}_i)+\epsilon_i$. The goal of the DQNN is to appropriate the function $f$ using $D$. To achieve this goal, the DQNN uses the structure as shown in Fig.~\ref{frame}. It consists of three parts, a quantum processor (QP), a classical processor (CP) and a classical optimizer (CO). QP part is a parameterized quantum circuit and its output is the expectation values of some measurement observables. CP part contains some parameterized sigmoid function and a linear transformation. CO minimizes a loss function by using the gradient of the parameters in QP and CP.

Before implementing the loop of the three parts, we need to encode the classical data into quantum system using the amplitude encoding method~\cite{PhysRevA.83.032302}. We firstly find a continuous injection, $F$, mapping $\bm{x}_i$ to the $2^n$-dim quantum state Hilbert space $\mathbb{C}_2^{\otimes n}$ with $d<2^n$. If $0\notin G$, we can transform the input $\bm{x}$ into $|\bm{\bar{x}}\rangle\in \mathbb{S}_0\subset \mathbb{C}_2^{\otimes n}$ as
\begin{equation}\label{mapping}
    F:\bm{x}\in G\rightarrow|\bm{\bar{x}}\rangle \equiv\frac{1}{\gamma}(x_1,...,x_d,\tilde{x},0,...,0)^{T}
\end{equation}
where $\tilde{x}=\frac{|\boldsymbol{x}|}{1+|\boldsymbol{x}|}$ and $\gamma=(|\boldsymbol{x}|^2+\tilde{x}^2)^{\frac{1}{2}}$; if $0\in G$, we can perform a shift transformation, $\bm{x}\rightarrow \bm{x}+\alpha$, such that $0\notin G$ which is a ring domain $\{\boldsymbol{x}\in \mathbb{R}^d| 0<\kappa_1\leq |\boldsymbol{x}|\leq \kappa_2\}$. It is worthwhile to mention that $0< \kappa_1 \leq |\boldsymbol{x}| \leq \kappa_2$ implies $\frac{\kappa_1}{1+\kappa_1}\leq \tilde{x} \leq \frac{\kappa_2}{1+\kappa_2}$ and 
\begin{equation}\label{Eq:x-d+1}
(1+(1+\kappa_2)^2)^{-\frac{1}{2}}<\bar{x}_{d+1}<(1+(1+\kappa_1)^2)^{-\frac{1}{2}}
\end{equation}
with $\bar{x}_{d+1}=\frac{\tilde{x}}{\gamma}$. After obtaining the new data set $\{|\bm{\bar{x}}_i\rangle,y_i\}$, we implement the loop of QP, CP and CO to approximate $f$.

\begin{figure}
    \setlength{\belowcaptionskip}{-5.0mm}
    \centering
    \subfigure[]{\includegraphics[height=2.7cm,width=4.7cm]{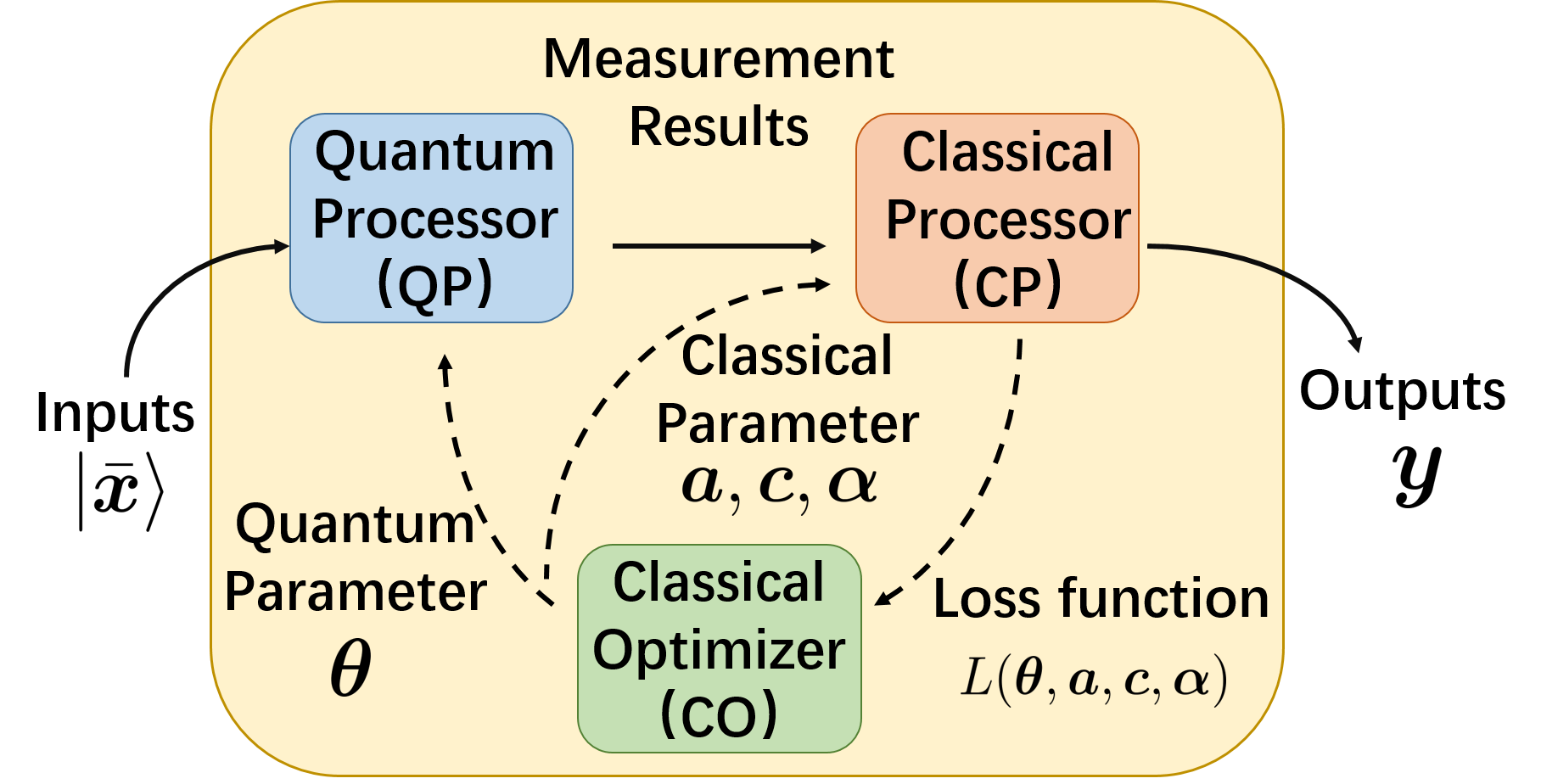}\label{frame}}
    \subfigure[]{\includegraphics[height=2.7cm,width=3.2cm]{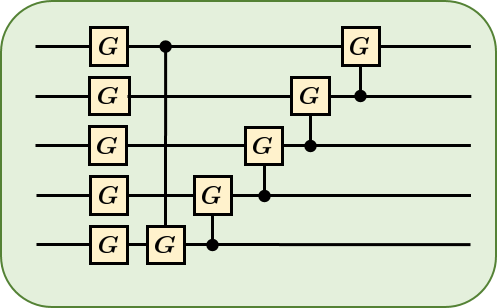}\label{ansatz}}
    \\
    \subfigure[]{\includegraphics[height=2.4cm,width=8cm]{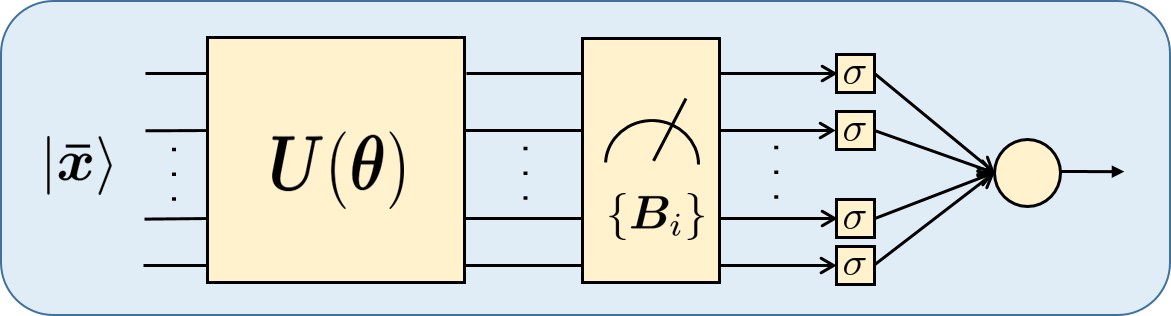}\label{circuit}}
    
    \caption{(a) The framework of the DQNN. The measurement results of the quantum processor are the inputs of the classical processor. The parameters are updated by a classical optimizer. (b) The circuit ansatz used in the numerical simulation. (c)The circuit structure of DQNN. The directed line represents the classical information, and the undirected line represents the quantum information.}
    \label{Fig1}
\end{figure}

 The specific structure of QP and CP are shown in Fig.~\ref{circuit}. In this paper, the QP part uses a specific circuit ansatz which is presented in~\cite{schuld2020circuit}. As shown in Fig.~\ref{ansatz}, the circuit ansatz represents the $U(\theta)$ in Fig.~\ref{circuit}. It contains $n$ parameterized single-qubit gates, $G(\theta_1,\theta_2,\theta_3)$, and $n$ parameterized two-qubit control gates, $CG(\theta_1,\theta_2,\theta_3):=|0\rangle\langle0|\otimes\mathbb{I}+|1\rangle\langle1|\otimes G(\theta_1,\theta_2,\theta_3)$ where $G$ is written as
\begin{equation}
    G(\theta_1,\theta_2,\theta_3)=\left(
    \begin{matrix}
    e^{i\theta_2}\cos(\theta_1) & e^{i\theta_3}\sin(\theta_1)\\
    -e^{-i\theta_3}\sin(\theta_1) & e^{-i\theta_2}\cos(\theta_1)
    \end{matrix}
    \right).
\end{equation}
QP outputs the measurement results, $\langle B_i\rangle=\mathrm{Tr}(U(\bm{\theta})|\bar{\bm{x}}\rangle\langle \bar{\bm{x}}|{U(\bm{\theta})}^{\dagger}B_i)$ with $N$ observables. The number, $N$, depends on the problem itself, and $\{B_i\}_{i=1}^{N}$ is a subset of the Pauli basis $\{P_i\}_{i=1}^{4^n}$.

The CP part applies the parameterized sigmoid function $\sigma^{(i)}$ to each of $\langle B_i\rangle$ where the sigmoid function is defined as
\begin{equation}
    \sigma^{(i)}(\langle B_i\rangle)\equiv\frac{1}{1+\exp\{-(a^{(i)}(\langle B_i\rangle-c^{(i)}))\}}
\end{equation}
with $a^{(i)}>2$ and $c^{(i)}\in[0,1]$. Then CP feeds $\{\sigma^{(i)}(\langle B_i\rangle)\}_{i=1}^N$ into a classical linear node and results in
\begin{equation}
    Q(\bar{\boldsymbol{x}}):=\sum_{j=1}^{N}\alpha_j\sigma(a_j(\langle B_j(\bar{\bm{x}},\bm{\theta})\rangle-c_j)),
\end{equation}
where $\bm{\alpha}=\{ \alpha_j\}$ are trainable. 

The final part of the DQNN, CO, minimizes a loss function $L(\boldsymbol{\theta},\boldsymbol{\alpha},\boldsymbol{a}, \boldsymbol{c}):=\sum_{k=1}^{m}||Q(\bar{\boldsymbol{x}}_k)-y_k||$ by using some gradient-based methods such as SGD~\cite{10.1007/978-3-7908-2604-3_16}, ADAM~\cite{DBLP:journals/corr/KingmaB14} and BFGS~\cite{buckley1985algorithm}, and obtains the optimal parameters, $(\boldsymbol{\theta}^*,\boldsymbol{\alpha}^*,\boldsymbol{a}^*, \boldsymbol{c}^*)=\arg\min L(\boldsymbol{\theta},\boldsymbol{\alpha},\boldsymbol{a}, \boldsymbol{c})$. The gradient of each parameter is analytically given below:
\begin{align*}
    \frac{Q(\bar{\bm{x}})}{\partial \theta_j}&=\frac{1}{2}\sum_i\alpha_i a_i\sigma^{(i)}(1-\sigma^{(i)})(\langle B_i\rangle_j^+-\langle B_i\rangle_j^-),\\
    \frac{Q(\bar{\bm{x}})}{\partial a_j}&=\alpha_j\sigma^{(j)}(1-\sigma^{(j)})(\langle B_j\rangle-c_j),\\
    \frac{Q(\bar{\bm{x}})}{\partial c_j}&=-\alpha_j\sigma^{(j)}(1-\sigma^{(j)})a_j,\\
     \frac{Q(\bar{\bm{x}})}{\partial \alpha_j}&=\sigma^{(j)}.
\end{align*}
where $\sigma^{(i)}$ represents $\sigma(a_i(\langle B_i(\bar{\bm{x}},\bm{\theta})\rangle-c_i))$ and  $\langle B_i\rangle_j^+$ and$\langle B_i\rangle_j^-$ denotes the expectation value $\langle B_i\rangle$ inserting $\pm\frac{\pi}{2}$ into the $j$-th quantum parameter $\theta_j$ according to the parameter-shift rule~\cite{mitarai2018quantum}.

The basic idea to design such a structure is whenever the QNN is designed by using the variational quantum circuit, the classical part containing in it can generate and enhance the nonlinearity of the hybrid system and further decrease the number of required qubits to achieve the quantum neural network. In this way, the universality can be proved and the complexity can be reduced.

\section{Universality of DQNN} 
The universality of DQNN is guaranteed by the following theorem: 
\begin{theorem}\label{theorem1}
    Given $f(\bar{\boldsymbol{x}}) \in L^2(\mathbb{S}_0)$, for arbitrary small $\epsilon$, we can select appropriate $N\in \mathbb{N}$, the unitary $U(\bm{\theta})$, observables $B_i$, and parameters $\alpha_i\in\mathbb{R}$, $a_i\in\mathbb{R}_+$ and $c_i\in [0,1](i=1,...,N)$ such that
    \begin{equation}
        \int_{\mathbb{S}_0} \left|\sum_{i=1}^{N}\alpha_i\sigma(a_i\langle B_i(\bar{\boldsymbol{x}},\boldsymbol{\theta})\rangle-c_i)-f(\bar{\boldsymbol{x}})\right|^2 d\mu(\bar{\boldsymbol{x}}) < \epsilon.
    \end{equation}
\end{theorem}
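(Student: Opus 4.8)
The plan is to reduce the statement to the classical universal approximation theorem, using two structural facts about the amplitude encoding $F$: that $\langle B_i(\bar{\bm{x}},\bm{\theta})\rangle$ is a real quadratic form in the amplitudes of $\ket{\bar{\bm{x}}}$, and that, by \eqref{Eq:x-d+1}, the encoded data set $\mathbb{S}_0$ lies in the open half-space $\{\bar x_{d+1}>0\}$. First I would note that, since $\ket{\bar{\bm{x}}}$ has real amplitudes supported on its first $d+1$ coordinates, $\langle B_i(\bar{\bm{x}},\bm{\theta})\rangle=\bar{\bm{x}}^{T}A_i\bar{\bm{x}}$, where $A_i=\operatorname{compr}_V\operatorname{Re}(U(\bm{\theta})^{\dagger}B_iU(\bm{\theta}))$ is the $(d+1)\times(d+1)$ real symmetric matrix obtained by compressing $\operatorname{Re}(U(\bm{\theta})^{\dagger}B_iU(\bm{\theta}))$ to $V:=\operatorname{span}(\bm e_1,\dots,\bm e_{d+1})$; in particular $\|A_i\|\le\|B_i\|=1$. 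I would also record the converse \emph{realizability claim} — for any prescribed finite family of real symmetric $A_i$ with $\|A_i\|\le1$ one can choose $n$, $U(\bm{\theta})$ and Pauli observables $B_i$ producing exactly these $A_i$ — and defer its proof; this is the delicate point.

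Second, I would exploit the geometry. By \eqref{Eq:x-d+1} every $\bar{\bm{x}}\in\mathbb{S}_0$ has $\bar x_{d+1}>0$, so $\mathbb{S}_0\cap(-\mathbb{S}_0)=\emptyset$ and any $f\in C(\mathbb{S}_0)$ extends to an \emph{even} function $f^{\mathrm{ev}}$ on the compact set $K:=\mathbb{S}_0\cup(-\mathbb{S}_0)$. Reducing first to $f\in C(\mathbb{S}_0)$ (dense in $L^2(\mathbb{S}_0,\mu)$), Cybenko's theorem gives $S(\bm z)=\sum_i\alpha_i\sigma(\bm w_i\cdot\bm z+b_i)$ uniformly close to $f^{\mathrm{ev}}$ on $K$; replacing $S$ by its antipodal symmetrization $\tfrac12(S(\bm z)+S(-\bm z))$ does not worsen the error (as $f^{\mathrm{ev}}$ is even) and turns each summand into an even ridge function. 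Restricting to $\mathbb{S}_0$, it therefore suffices to approximate $f$ by a finite sum $\sum_i\bar g_i(\bm w_i\cdot\bar{\bm{x}})$ of even ridge functions with $\bm w_i\in V$.

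Third, I would realize each even ridge term inside the DQNN. Writing $\bm w_i=\lambda_i\hat{\bm w}_i$ with $\hat{\bm w}_i$ a unit vector in $V$, evenness gives $\bar g_i(\bm w_i\cdot\bar{\bm{x}})=h_i\big((\hat{\bm w}_i\cdot\bar{\bm{x}})^{2}\big)$ for a continuous $h_i$ on $[0,1]$. The one-dimensional universal approximation theorem approximates $h_i$ by $\sum_k\alpha_{ik}\sigma(\tilde a_{ik}u-\tilde c_{ik})$; and since $(\hat{\bm w}_i\cdot\bar{\bm{x}})^{2}=\bar{\bm{x}}^{T}(\hat{\bm w}_i\hat{\bm w}_i^{T})\bar{\bm{x}}$ while $\bar{\bm{x}}^{T}(cI)\bar{\bm{x}}=c$ on $\mathbb{S}_0$, the argument $\tilde a_{ik}(\hat{\bm w}_i\cdot\bar{\bm{x}})^{2}-\tilde c_{ik}$ can be written as $a(\bar{\bm{x}}^{T}A\bar{\bm{x}})-c$ with $c:=\min\{\max\{\tilde c_{ik},0\},1\}\in[0,1]$, $a$ any number larger than $\max\{2,\tilde a_{ik}+|c-\tilde c_{ik}|\}$, and $A:=\tfrac{\tilde a_{ik}}{a}\hat{\bm w}_i\hat{\bm w}_i^{T}+\tfrac{c-\tilde c_{ik}}{a}I$, which is real symmetric with $\|A\|<1$. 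Invoking the realizability claim for this finite collection of matrices $A$ turns every term into $\alpha\,\sigma\big(a(\langle B(\bar{\bm{x}},\bm{\theta})\rangle-c)\big)$ (with its own $\alpha,a,c,B$); collecting all such terms under a single index and propagating the uniform (hence $L^2$) errors gives the theorem.

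The main obstacle is the realizability claim: obtaining the required quadratic forms with a \emph{single} circuit $U(\bm{\theta})$ and genuine Pauli observables, since the Paulis form a discrete set. I would handle it by enlarging the register with $O(\log N)$ ancilla qubits — enough that $\dim\mathrm{U}(2^{n})$ exceeds the $O(Nd^{2})$ real constraints — fixing $N$ distinct Pauli strings $B_i$, and showing that $U\mapsto(\operatorname{compr}_V\operatorname{Re}(U^{\dagger}B_iU))_i$ is a submersion at a generic $U$, so that its image contains every sufficiently small prescribed tuple (the $A$ above can be made as small as desired at the cost of a larger $a$). The only nontrivial step is excluding the degenerate case in which the differential $H\mapsto(\operatorname{compr}_V\operatorname{Re}(i[U^{\dagger}B_iU,H]))_i$ fails to be surjective for \emph{every} $U$; this genericity argument over a compact group is, I expect, the technical heart. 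I note that if one allows a separate unitary per observable the claim is immediate — a single rank-$2^{n-1}$ projector Pauli conjugated by a free unitary already realizes every real symmetric form of norm $\le1$ — and then the whole proof needs neither ancilla registers nor a genericity argument.
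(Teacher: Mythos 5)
Your proposal is essentially correct but takes a genuinely different route from the paper. The paper's proof is non-constructive: it fixes the function class $Q(\mathbb{S}_0)$ of sums $\sum_i\alpha_i\sigma(a_i(|\langle\bar{\boldsymbol{x}}|\boldsymbol{\xi}_i\rangle|^2-c_i))$ with \emph{arbitrary} unit vectors $\boldsymbol{\xi}_i$, assumes $\overline{Q(\mathbb{S}_0)}\neq L^2(\mathbb{S}_0)$, and derives a contradiction via Hahn--Banach and Riesz representation, the key step being that as $a\to\infty$ and $c\to 1$ the sigmoid $\sigma(a(|\langle\bar{\boldsymbol{x}}|\boldsymbol{\xi}^*\rangle|^2-c))$ converges to the indicator of a small ball around $\boldsymbol{\xi}^*$, so an annihilating $g$ cannot exist. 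You instead give a constructive reduction to Cybenko's classical theorem via the even extension of $f$ to $\mathbb{S}_0\cup(-\mathbb{S}_0)$ and the identity $\bar g(\boldsymbol{w}\cdot\bar{\boldsymbol{x}})=h\big((\hat{\boldsymbol{w}}\cdot\bar{\boldsymbol{x}})^2\big)$. Both arguments hinge on exactly the same geometric fact, Eq.~(\ref{Eq:x-d+1}): the paper uses $\bar{x}_{d+1}>0$ to exclude the antipodal alternative $|\bar{\boldsymbol{x}}-\boldsymbol{\xi}^*|^2>2(1+\sqrt{c})$, while you use it to guarantee $\mathbb{S}_0\cap(-\mathbb{S}_0)=\emptyset$ so the even extension is well defined. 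Your route buys explicitness (it could inherit quantitative rates from Cybenko-type results) at the cost of more bookkeeping; the paper's duality argument is shorter but gives no handle on $N$.

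Two remarks. First, the realizability claim you rightly isolate as the delicate point is not resolved in the paper either: the paper simply declares ``without loss of generality'' that each $B_i$ is a rank-one projector whose rotated eigenvector $\boldsymbol{\xi}_i=U^{\dagger}(\boldsymbol{\theta})|\boldsymbol{b}_{i,1}\rangle$ ranges freely over the unit sphere, which sits uneasily with the main text's stipulation that the $B_i$ be Pauli strings measured after a single circuit. So your submersion/genericity sketch addresses a gap the paper leaves open rather than one your approach creates; granting yourself the same liberty the paper takes, your proof closes. In that case you should drop the $\gamma I$ correction: arrange the one-dimensional approximation of $h_i$ on $[0,1]$ so that every threshold $\tilde c_{ik}/\tilde a_{ik}$ lies in $[0,1]$ (sigmoids with thresholds outside the domain are nearly constant there and can be absorbed into a constant term), and then $A=\hat{\boldsymbol{w}}_i\hat{\boldsymbol{w}}_i^{T}$ is exactly the rank-one form $|\langle\bar{\boldsymbol{x}}|\boldsymbol{\xi}\rangle|^2$ with real $\boldsymbol{\xi}=\hat{\boldsymbol{w}}_i$ that the paper's class $Q(\mathbb{S}_0)$ allows; this adjustment is needed anyway, since $\beta\hat{\boldsymbol{w}}\hat{\boldsymbol{w}}^{T}+\gamma I$ is full rank and cannot be produced by a rank-one projector. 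Second, a small algebra slip: matching $a(\bar{\boldsymbol{x}}^{T}A\bar{\boldsymbol{x}}-c)=\tilde a u-\tilde c$ on the unit sphere requires the identity coefficient $c-\tilde c/a$ rather than $(c-\tilde c)/a$; this is immaterial to the argument.
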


\begin{proof}
Denoting the quantum circuit of the DQNN in Fig.~\ref{circuit} as $U(\bm{\theta})$, which maps the quantum data $\ket{\bar{\boldsymbol{x}}}$ into $\ket{\bar{\boldsymbol{x}}_f}=U(\bm{\theta})\ket{\bar{\boldsymbol{x}}}$, the output $y$ of DQNN is derived through measuring a set of observables $\{B_i\}_{i=1}^{N}$ on the final state $\ket{\bar{\boldsymbol{x}}_f}$. Based on measurement statistics, $y=\sum_{i=1}^{N}\alpha_i\sigma(a_i(\langle B_i(\bar{\bm{x}},\bm{\theta})\rangle-c_i))$ is found according to:
\begin{align*}
\langle B_i(\bar{\boldsymbol{x}},\boldsymbol{\theta})\rangle=\bra {\bar{\boldsymbol{x}}_f}B_i\ket{\bar{\boldsymbol{x}}_f}=\sum_j\lambda_{i,j}|\langle\bar{\boldsymbol{x}}|\boldsymbol{\xi}_{i,j}\rangle|^2
\end{align*}
where $\lambda_{i,j}$ and $|\bm{b}_{i,j}\rangle$ are the eigenvalues and the eigenvectors of $B_i$, and $\ket{\bm{\xi}_{i,j}}\equiv U^{\dag}(\bm{\theta})|\bm{b}_{i,j}\rangle$. In the following, we make two assumptions: the number of observables $N$ is sufficiently large, and the circuit ansatz, represented as $U(\bm{\theta})$, should have enough expressibility to approximate any arbitrary unitary evolution. The former is guaranteed if the local Pauli operators on each qubit can be measured, and the latter is guaranteed if the set of physically-implementable gates form a universal gate set. Without loss of generality, we will restrict ourselves to the case of $\lambda_{i,1}=1$ and $\lambda_{i,j}=0$ ($j\neq 1$) where $\langle B_i\rangle = |\langle\bar{\boldsymbol{x}}|\boldsymbol{\xi}_i\rangle|^2$. The output of the DQNN is the function 
\[
q(\bar{\boldsymbol{x}})_{\bm{\alpha,a,\xi_1,\cdots, \xi_{N}, c}} = \sum_{i=1}^{N}\alpha_i\sigma(a_i(|\langle\bar{\boldsymbol{x}}|\boldsymbol{\xi}_i\rangle|^2-c_i)).
\]
Denoted by $Q(\mathbb{S}_0)$, the function space is composed of the finite linear combination of the sigmoid-type functions: 
\begin{equation}
    \begin{aligned}  & Q(\mathbb{S}_0) = \Big\{ q(\bar{\boldsymbol{x}})_{\bm{\alpha,a,\xi_1,\cdots, \xi_{N}, c}} \ : \  N\in\mathbb{N}, \bm{\alpha} \in \mathbb{R}^{N} ,\\
    & \quad \bm{a} \in\mathbb{R}_+^{N}, \bm{c} \in [0,1]^{N}, \{\bm{\xi}_i \}_{i=1}^{N} \subset \mathbb{S}_{\mathbb{C}^N} \Big\}. 
    \end{aligned}
\end{equation}
To prove the universality, it suffices to verify that $Q(\mathbb{S}_0)$ is dense in $L^2(\mathbb{S}_0)$. 
We assume that the closure $\overline{Q(\mathbb{S}_0)}\neq L^2(\mathbb{S}_0)$. 
The contradiction is shown in supplemental material. 
\end{proof}

One can see that the two above assumptions are crucial for the validity of the proof. In order to satisfy them, the circuit can become very long and is required to be repeated many times to derive the measurement outcomes for all $B_i$.
In theory, the classical NN requires infinite neurons in one layer to be universal~\cite{citeulike:3561150}. However, in practice, the classical case only uses a finite number of neurons and has obtained excellent results in various aspects. Analogously, for a given data set, a chosen DQNN structure, a finite number of observables in DQNN are often sufficient to approximate the target function well, and this will be demonstrated in the following examples. In addition, the length of the DQNN circuit depends on the given data set, and it may become very long for a special data set. In fact, it is an important unsolved question on how exactly the complexity of the DQNN circuit depends on the data set. 

\section{Circuit complexity for DQNN}
The advantage of DQNN is introducing the classical sigmoid function to generate the nonlinearity and significantly reduces the circuit complexity compared with two duplication-based QNNs, quantum circuit learning(QCL) algorithm~\cite{mitarai2018quantum} and circuit-centric quantum classifiers(CCQ) algorithm~\cite{schuld2020circuit}. The circuit complexity to implement the QNNs is given as $ C:=O(n_gn_b)$ where $n_g,n_b$ respectively denote the number of quantum gates and the number of measurement observables. Without loss of generality, we assume the number of gates is polynomial to the number of required qubits.

To show the difference of complexity among QNN, QCL and CCQ, we concentrate on a polynomial function approximation problem whose goal is approximating an $M$-order polynomial function of $\bm{x}\in\mathbb{R}^d$. The CCQ stores $\bm{x}$ into the amplitudes of data qubits as $|\bm{x}\rangle=\frac{1}{||\bm{x}||}\sum_{i=1}^d x_i|i\rangle$ and needs $O(M)$ copies of $|\bm{x}\rangle$. After applying $n_g=O(poly(M\lceil \log d \rceil))$ gates, $O(1)$ POVM operators is used to measure the system. Its complexity is $C_{\text{CCQ}}=O(poly(M\lceil \log d \rceil))$. In the meanwhile, the QCL encodes $\bm{x}$ as $\rho^{\otimes M}(\boldsymbol{x})=\frac{1}{2^d}\otimes_{i=1}^d \left(\otimes_{k=1}^M[\mathcal{I}+x_i\sigma_x^{(k)}+\sqrt{1-x_i^2}\sigma_z^{(k)}]\right)$ using $M$ copies of the data qubits $\rho(\bm{x})$. QCL uses $O(1)$ Pauli operators to measure the circuit. Therefore, its complexity is $C_{\text{QCL}}=O(poly(Md))$. Because the classical sigmoid function makes DQNN need no duplication and the encoding method is the same as CCQ, the DQNN only needs $n_g=O(poly(\lceil \log d \rceil))$ gates. Moreover, the number of observables is independent to the number of required qubits but a hyperparameter determined by the problem. The complexity of DQNN is $C_{\text{DQNN}}=O(poly(\lceil \log d \rceil))$. The comparison is summarized in Table.~\ref{compare}. It can be seen that DQNN efficiently reduces the number of required qubits compared with CCQ and QCL. 
\begin{table}[htp]
    \centering
    \begin{tabular}{c|c|c}
    \hline
    \hline
    Algorithm &  \# Duplication  & \# Data qubits \\
    \hline
    QCL & $M$ & $d$\\
    \hline
    CCQ & $M$ & $\lceil \log d \rceil$ \\
    \hline
    DQNN & $1$ & $\lceil \log d \rceil$ \\
    \hline
    \hline
    \end{tabular}
    \caption{The number of required qubits among three proposals to approximate an $M$-order polynomial function of $\bm{x}\in\mathbb{R}^d$. }
    \label{compare}
\end{table}

\begin{table*}[htp]
    \setlength{\belowcaptionskip}{-3.8mm}
    \centering
    \begin{tabular}{c|c|c|c|c|c}
    \hline
    \hline
    \multicolumn{6}{c}{Regression Task}\\
    \hline
    Algorithm & Qubits Number & Layers Number & Copies Number & $c$ &Mean Relative Error \\
    \hline
    DQNN & 2 & 1 & 1 & 120 & 4.29\%\\
    \hline
    QCL & 2 & 2 & 1 & 24 & 65.27\%\\
    \hline
    QCL & 4 & 3 & 2 & 144 &50.58\%\\
    \hline
    QCL & 6 & 6 & 3 & 648 & 87.51\%\\
    \hline
    \multicolumn{6}{c}{Classification Task}\\
    \hline
    Algorithm & Qubits Number & Layers Number & Copies Number & $c$ &Accuracy \\
    \hline
    DQNN & 2 & 1 & 1 & 240 & 97.63\%\\
    \hline
    CCQ & 2 & 3 & 1 & 78 & 56\%\\
    \hline
    CCQ & 4 & 3 & 2 & 300 & 81.25\%\\
    \hline
    CCQ & 6 & 6 & 3 & 1674 & 82.63\%\\
    \hline
    \hline
    \end{tabular}
    \caption{The running result and specific setup of DQNN, QCL and QCL. $c$ is a specific value of the circuit complexity which is calculated by $c=n_g*n_b$.}\label{fitting_result}  
    \qquad
    \vspace{2mm}
    
    \begin{tabular}{c|c|c|c|c|c}
    \hline
    \hline
    Task & n & $N_{train}$ & $N_{test}$ & $\epsilon_{Train}$ & $\epsilon_{Test}$ \\
    \hline
    MNIST$_2$ & $8$ &$12665$ & $2115$  & $0.0047$ & $0.0009$\\
    \hline
    MNIST$_3$ & $8$ &$18623$& $3147$ & $0.0172$ & $0.0114$\\
    \hline
    Wine & $4$ & $143$& $35$ & $0.0000$ & $0.0286$ \\
    \hline
    Breast Cancer&$4$&$560$& $39$ &$0.0143$ & $0.0432$ \\
    \hline
    \hline
    \end{tabular}
    \caption{Implement the DQNN on the real-world data sets. We use ADAM algorithm to optimize the parameters. $n$ indicates the number of qubits, $N_{Train}$ and $N_{Test}$ respectively represent the number of the training set and the test set. $\epsilon_{Train}$ and $\epsilon_{Test}$ represent the errors on each data set.}
    \label{Actual set}
\end{table*}

\section{Applications}\label{application_sec}
We design two data sets, regression and classification data sets, to show the advantage of the DQNN compared with QCL and CCQ. The regression data set (Fig.~\ref{fit_com_dataset}) contains $400$ data samples which are randomly generated by $f=(0.715625 - 1.0125 x_1^2 + x_1^4) (0.715625 - 1.0125 x_2^2 + x_2^4)$ with $x_1,x_2\in[-0.8,0.8]$. The classification data set (Fig.~\ref{classification_com_dataset}) has $800$ data samples where the boundaries are generated by $x_1^2+x_2^2=0.16$ and $x_1^2+x_2^2=0.81$ with $x_1,x_2\in[-1,1]$. In the donut-like area, the data samples are labeled $y^{(i)}=[1,0]^T$ and others are $y^{(i)}=[0,1]^T$. We use different numbers of duplications and layers in QCL and CCQ. The DQNN uses $5$ and $10$ randomly generated observables respectively in the regression and classification task. The simulation results (Table.~\ref{fitting_result}) show that the performances of DQNN are better than other proposals which complexity is around $5$ times larger both in regression and classification. With the increasing of the number of copies and layers, CCQ in the classification task shows underfitting which leads to a lower accuracy.
\begin{figure}[htp]
    \centering
     \subfigure[] 
    {
     \begin{minipage}{.45\linewidth}\label{fit_com_dataset}
      \includegraphics[width=4cm,height=3cm]{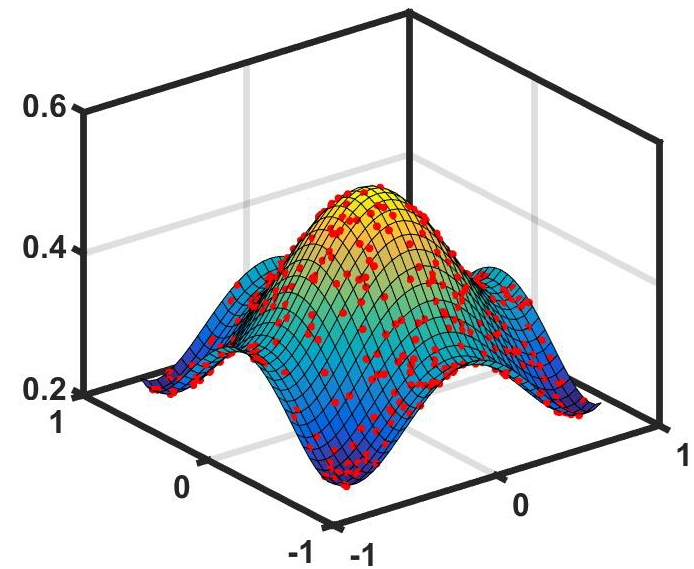}
     \end{minipage}
    }
    \subfigure[]
    {
     \begin{minipage}{.45\linewidth}\label{classification_com_dataset}
      \includegraphics[width=4cm,height=3cm]{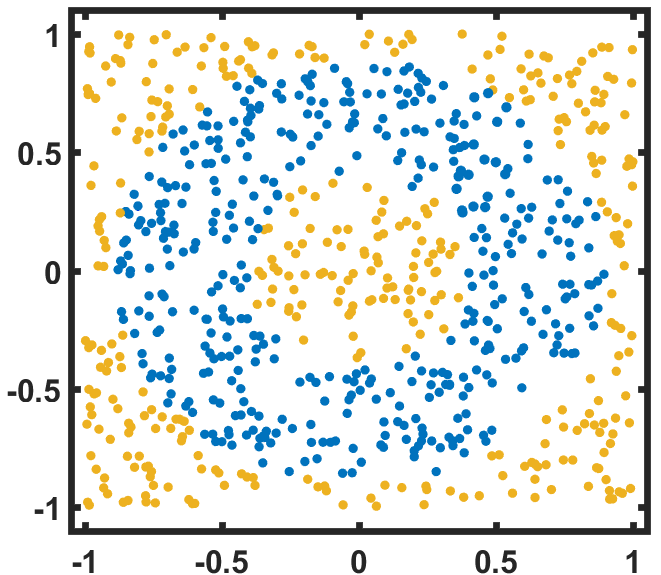}
     \end{minipage}
    }
   \caption{$(a)$ The regression data set with a polynomial function. $(b)$ The classification data set with two circular decision boundaries.}
\end{figure}

Additionally, in order to compare with the classical counterpart, we implement the classical neural network (NN) based on above two tasks to compare the performance with QNN. The results show that with similar number of parameters, the DQNN has the same power as its classical analogue. In the regression task, the training process (Fig.~\ref{fit_com_hybrid}) shows that both proposals have the similar rate of convergence and the mean relative error on NN is $4.19\%$ which is similar with DQNN. As for the classification task, the training process with the accuracy shown in. Fig.~\ref{classification_com_hybrid} demonstrates that the neural network with the similar number of parameters achieves $74.5\%$ (NNv2). Meanwhile, the classical neural network which contains three times more parameters than DQNN can achieves the similar accuracy as DQNN. 

To verify the power of DQNN on the real-world data sets, we further implement some classification tasks. Firstly, on the handwritten digits data set, MNIST~\cite{lecun2010}, we choose $0$ and $1$ for a binary classification and $0$, $1$, $2$ for a multi-target classification on MNIST. Each picture is reshaped into $16\times16$ and encoded into $8$ qubits. Additionally, on the Wine and Breast Cancer data sets~\cite{Dua:2019}, we randomly divide the data sets into five equivalent part and pick one of them as the test set. The result (Table.~\ref{Actual set}) shows the DQNN has the ability to find the complexity pattern hidden in the real-world data sets.   

Besides the classical tasks above, DQNN provides the ability to investigate the intrinsic property of quantum mechanics such as the quantum phase recognition (QPR). Specifically, we apply the DQNN to the $\mathbb{Z}_2 \times \mathbb{Z}_2$ symmetry-protected topological (SPT) phase discrimination task~\cite{cong2019quantum}. The ground states of a parameterized spin$-1/2$ chain Hamiltonian,
\begin{align*}
    H=&-J\sum_{i=1}^{N-2}\sigma_z^{(i)}\sigma_x^{(i+1)}\sigma_z^{(i+2)}-h_1\sum_{i=1}^N\sigma_x^{(i)}\\
    &-h_2\sum_{i=1}^{N-1}\sigma_x^{(i)}\sigma_x^{(i+1)}
\end{align*}
where $h_1$, $h_2$ and $J$ are parameters, corresponds to the different topological phases. The phase diagram of the Hamiltonian is given in Fig.~\ref{phase diagram}. The train data takes $400$ equally spaced points from $h_1\in[0,1.6]$ and $h_2\in[-1.6,1.6]$. And the test data contains $4096$ equally spaced points. The ground state corresponding to each point is labeled $[1,0]^T$, if it belongs to the SPT phase. Otherwise it is labeled $[0,1]^T$. We numerically implement the DQNN with $15$ qubits, $420$ parameters and $10$ observables. The accuracy on the test data achieves $99.10\%$. It shows the DQNN could find the relation between the ground states of the Hamiltonian and their corresponding phase.
\begin{figure}[htp]
    \centering
    \subfigure[]
    {
     \begin{minipage}{.45\linewidth}\label{fit_com_hybrid}

     \includegraphics[width=4cm,height=3cm]{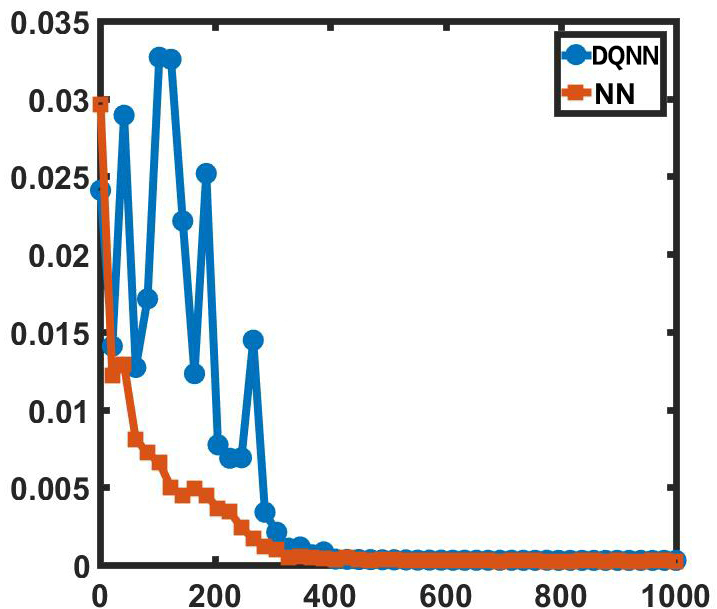}
     \end{minipage}
    }
    \subfigure[]
    {
     \begin{minipage}{.45\linewidth}\label{classification_com_hybrid}
     \includegraphics[width=4cm,height=3cm]{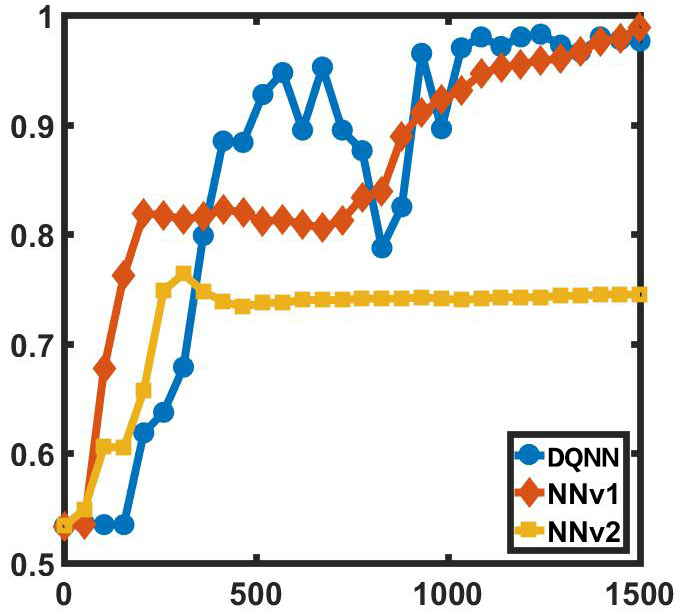}
     \end{minipage}
    }
    \subfigure[]
    {
     \begin{minipage}{\linewidth}\label{phase diagram}
     \centering
     \includegraphics[width=5.5cm,height=4cm]{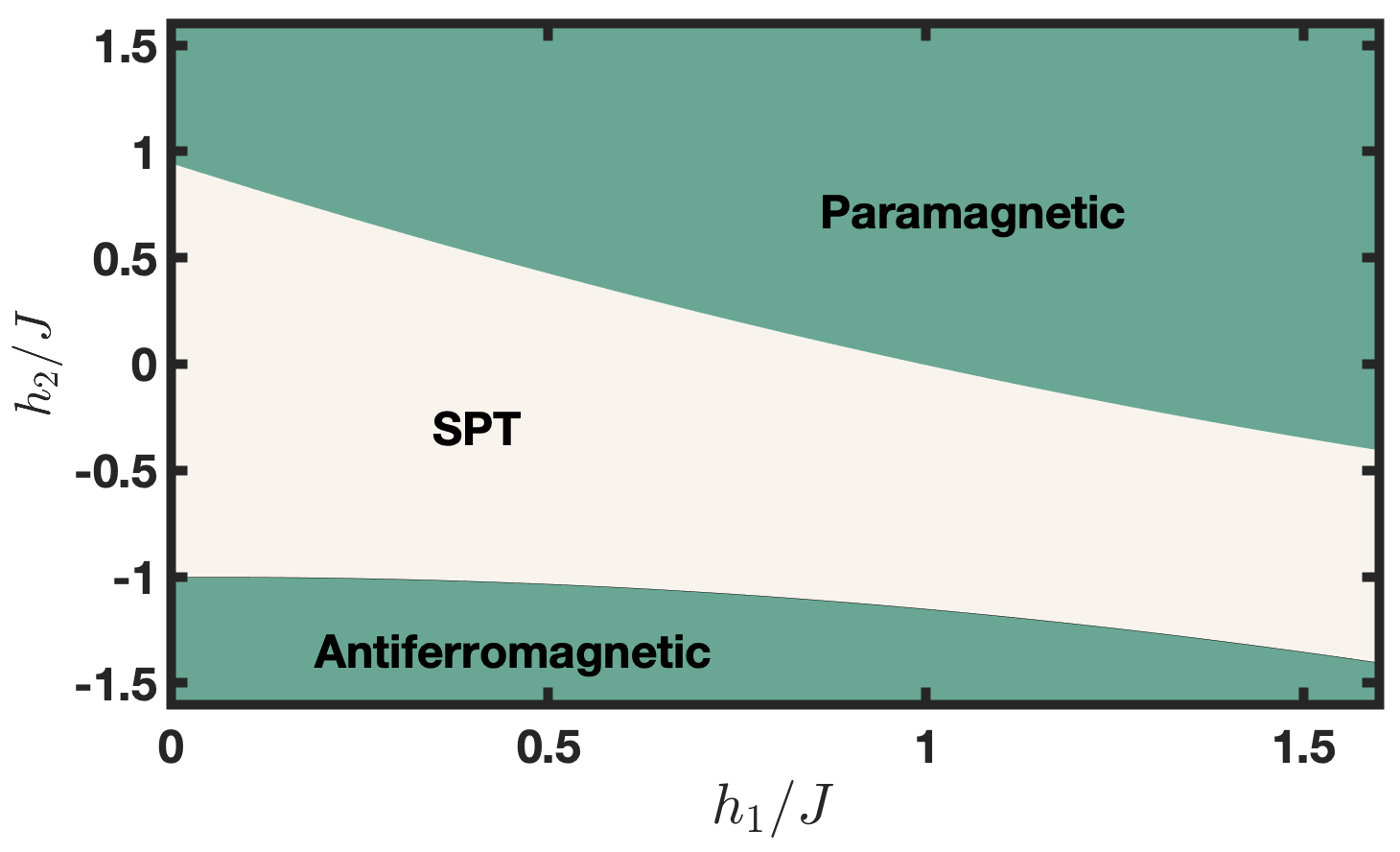}
     \end{minipage}
    }
   \caption{$(a)$ The MSE loss with training episodes of QNN and classical neural network. $(b)$ The classification accuracy with training episodes of QNN and classical neural network. $(c)$ The phase diagram of the spin$-1/2$ chain. The phase boundary is generated by the 2-degree polynomial regression based on some boundary points.}
   \label{Fig6}
\end{figure}

\section{Conclusion}
In this article, we present the universal duplication-free quantum neural network whose nonlinearity is generated by the classical sigmoid function. The simulation results show that the DQNN significantly reduces the number of required qubits to complete the supervised learning tasks compared with previous work and has the ability to recognize the SPT phase of a spin-$1/2$ chain Hamiltonian.  However, how to design an appropriate circuit ansatz for a certain problem still remains an open question. Besides the scenarios discussed in this work, we expect the duplication-free quantum neural network has broad applications in other area, including natural language processing, computer version and reinforcement learning.

\section{Acknowledgement}
The authors gratefully acknowledge the grant from National Key R\&D Program of China, Grant No. 2018YFA0306703. We also thank Chu Guo, Bujiao Wu, Yusen Wu, Shaojun Wu, Yuhan Huang, Dingding Wen and Yi Tian for helpful discussions.
\bibliographystyle{unsrtnat}
\bibliography{ref.bib}

\begin{thebibliography}{28}
\providecommand{\natexlab}[1]{#1}
\providecommand{\url}[1]{\texttt{#1}}
\expandafter\ifx\csname urlstyle\endcsname\relax
  \providecommand{\doi}[1]{doi: #1}\else
  \providecommand{\doi}{doi: \begingroup \urlstyle{rm}\Url}\fi

\bibitem[Kalchbrenner et~al.(2014)Kalchbrenner, Grefenstette, and
  Blunsom]{kalchbrenner2014convolutional}
Nal Kalchbrenner, Edward Grefenstette, and Phil Blunsom.
\newblock A convolutional neural network for modelling sentences.
\newblock In \emph{Proceedings of the 52nd Annual Meeting of the Association
  for Computational Linguistics}, volume~1, pages 655--665. Association for
  Computational Linguistics, 2014.
\newblock \doi{https://doi.org/10.3115/v1/P14-1062}.

\bibitem[{Mikolov} et~al.(2011){Mikolov}, {Kombrink}, {Burget}, {Černocký},
  and {Khudanpur}]{mikolov2011extensions}
T.~{Mikolov}, S.~{Kombrink}, L.~{Burget}, J.~{Černocký}, and S.~{Khudanpur}.
\newblock Extensions of recurrent neural network language model.
\newblock In \emph{2011 IEEE International Conference on Acoustics, Speech and
  Signal Processing (ICASSP)}, pages 5528--5531, 2011.
\newblock \doi{10.1109/ICASSP.2011.5947611}.

\bibitem[Bojarski et~al.(2016)Bojarski, Del~Testa, Dworakowski, Firner, Flepp,
  Goyal, Jackel, Monfort, Muller, Zhang, et~al.]{bojarski2016end}
Mariusz Bojarski, Davide Del~Testa, Daniel Dworakowski, Bernhard Firner, Beat
  Flepp, Prasoon Goyal, Lawrence~D Jackel, Mathew Monfort, Urs Muller, Jiakai
  Zhang, et~al.
\newblock End to end learning for self-driving cars.
\newblock \emph{arXiv preprint arXiv:1604.07316}, 2016.
\newblock URL \url{https://arxiv.org/abs/1604.07316}.

\bibitem[Bishop(2006)]{bishop2006pattern}
Christopher~M Bishop.
\newblock \emph{Pattern recognition and machine learning}.
\newblock springer, 2006.
\newblock \doi{https://doi.org/10.1007/978-1-4615-7566-5}.

\bibitem[Ian et~al.(2016)Ian, Yoshua, and Aaron]{goodfellow2016deep}
Goodfellow Ian, Bengio Yoshua, and Courville Aaron.
\newblock \emph{Deep learning}, volume~1.
\newblock MIT Press, 2016.
\newblock URL \url{https://books.google.co.in/books?id=Np9SDQAAQBAJ}.

\bibitem[Cybenko(1989)]{citeulike:3561150}
G.~Cybenko.
\newblock Approximation by superpositions of a sigmoidal function.
\newblock \emph{Mathematics of Control, Signals, and Systems}, 2\penalty0
  (4):\penalty0 303--314, 1989.
\newblock \doi{https://doi.org/10.1007/BF02551274}.

\bibitem[Hornik(1991)]{hornik1991approximation}
Kurt Hornik.
\newblock Approximation capabilities of multilayer feedforward networks.
\newblock \emph{Neural networks}, 4\penalty0 (2):\penalty0 251--257, 1991.
\newblock \doi{https://doi.org/10.1016/0893-6080(91)90009-T}.

\bibitem[Biamonte et~al.(2017)Biamonte, Wittek, Pancotti, Rebentrost, Wiebe,
  and Lloyd]{biamonte2017quantum}
Jacob Biamonte, Peter Wittek, Nicola Pancotti, Patrick Rebentrost, Nathan
  Wiebe, and Seth Lloyd.
\newblock Quantum machine learning.
\newblock \emph{Nature}, 549\penalty0 (7671):\penalty0 195--202, 2017.
\newblock \doi{https://doi.org/10.1038/nature23474}.

\bibitem[Rebentrost et~al.(2014)Rebentrost, Mohseni, and
  Lloyd]{rebentrost2014quantum}
Patrick Rebentrost, Masoud Mohseni, and Seth Lloyd.
\newblock Quantum support vector machine for big data classification.
\newblock \emph{Physical review letters}, 113\penalty0 (13):\penalty0 130503,
  2014.
\newblock \doi{10.1103/PhysRevLett.113.130503}.

\bibitem[Lloyd et~al.(2013)Lloyd, Mohseni, and Rebentrost]{lloyd2013quantum}
Seth Lloyd, Masoud Mohseni, and Patrick Rebentrost.
\newblock Quantum algorithms for supervised and unsupervised machine learning.
\newblock \emph{arXiv preprint arXiv:1307.0411}, 2013.
\newblock URL \url{https://arxiv.org/abs/1307.0411}.

\bibitem[Lloyd et~al.(2014)Lloyd, Mohseni, and Rebentrost]{lloyd2014quantum}
Seth Lloyd, Masoud Mohseni, and Patrick Rebentrost.
\newblock Quantum principal component analysis.
\newblock \emph{Nature Physics}, 10\penalty0 (9):\penalty0 631--633, 2014.
\newblock \doi{https://doi.org/10.1038/nphys3029}.

\bibitem[Wiebe et~al.(2012)Wiebe, Braun, and Lloyd]{wiebe2012quantum}
Nathan Wiebe, Daniel Braun, and Seth Lloyd.
\newblock Quantum algorithm for data fitting.
\newblock \emph{Physical review letters}, 109\penalty0 (5):\penalty0 050505,
  2012.
\newblock \doi{10.1103/PhysRevLett.109.050505}.

\bibitem[Amin et~al.(2018)Amin, Andriyash, Rolfe, Kulchytskyy, and
  Melko]{PhysRevX.8.021050}
Mohammad~H. Amin, Evgeny Andriyash, Jason Rolfe, Bohdan Kulchytskyy, and Roger
  Melko.
\newblock Quantum boltzmann machine.
\newblock \emph{Phys. Rev. X}, 8:\penalty0 021050, May 2018.
\newblock \doi{10.1103/PhysRevX.8.021050}.

\bibitem[Huang et~al.(2021)Huang, Broughton, Mohseni, Babbush, Boixo, Neven,
  and McClean]{Huang_2021}
Hsin-Yuan Huang, Michael Broughton, Masoud Mohseni, Ryan Babbush, Sergio Boixo,
  Hartmut Neven, and Jarrod~R. McClean.
\newblock Power of data in quantum machine learning.
\newblock \emph{Nature Communications}, 12\penalty0 (1), 2021.
\newblock ISSN 2041-1723.
\newblock \doi{https://doi.org/10.1038/s41467-021-22539-9}.

\bibitem[Killoran et~al.(2019)Killoran, Bromley, Arrazola, Schuld, Quesada, and
  Lloyd]{killoran2019continuous}
Nathan Killoran, Thomas~R Bromley, Juan~Miguel Arrazola, Maria Schuld,
  Nicol{\'a}s Quesada, and Seth Lloyd.
\newblock Continuous-variable quantum neural networks.
\newblock \emph{Physical Review Research}, 1\penalty0 (3):\penalty0 033063,
  2019.
\newblock \doi{10.1103/PhysRevResearch.1.033063}.

\bibitem[Ventura and Martinez(1998)]{ventura1998artificial}
Dan Ventura and Tony Martinez.
\newblock An artificial neuron with quantum mechanical properties.
\newblock In \emph{Artificial Neural Nets and Genetic Algorithms}, pages
  482--485. Springer, 1998.
\newblock \doi{https://doi.org/10.1038/s41534-019-0140-4}.

\bibitem[Cao et~al.(2017)Cao, Guerreschi, and Aspuru-Guzik]{cao2017quantum}
Yudong Cao, Gian~Giacomo Guerreschi, and Al{\'a}n Aspuru-Guzik.
\newblock Quantum neuron: an elementary building block for machine learning on
  quantum computers.
\newblock \emph{arXiv preprint arXiv:1711.11240}, 2017.
\newblock URL \url{https://arxiv.org/abs/1711.11240}.

\bibitem[Schuld et~al.(2020)Schuld, Bocharov, Svore, and
  Wiebe]{schuld2020circuit}
Maria Schuld, Alex Bocharov, Krysta~M Svore, and Nathan Wiebe.
\newblock Circuit-centric quantum classifiers.
\newblock \emph{Physical Review A}, 101\penalty0 (3):\penalty0 032308, 2020.
\newblock \doi{10.1103/PhysRevA.101.032308}.

\bibitem[Mitarai et~al.(2018)Mitarai, Negoro, Kitagawa, and
  Fujii]{mitarai2018quantum}
Kosuke Mitarai, Makoto Negoro, Masahiro Kitagawa, and Keisuke Fujii.
\newblock Quantum circuit learning.
\newblock \emph{Physical Review A}, 98\penalty0 (3):\penalty0 032309, 2018.
\newblock \doi{https://doi.org/10.1103/PhysRevA.98.032309}.

\bibitem[Plesch and Brukner(2011)]{PhysRevA.83.032302}
Martin Plesch and \ifmmode \check{C}\else~\v{C}\fi{}aslav Brukner.
\newblock Quantum-state preparation with universal gate decompositions.
\newblock \emph{Phys. Rev. A}, 83:\penalty0 032302, Mar 2011.
\newblock \doi{10.1103/PhysRevA.83.032302}.

\bibitem[Bottou(2010)]{10.1007/978-3-7908-2604-3_16}
L{\'e}on Bottou.
\newblock Large-scale machine learning with stochastic gradient descent.
\newblock In Yves Lechevallier and Gilbert Saporta, editors, \emph{Proceedings
  of COMPSTAT'2010}, pages 177--186. Physica-Verlag HD, 2010.
\newblock \doi{10.1007/978-3-7908-2604-3_16}.

\bibitem[Kingma and Ba(2015)]{DBLP:journals/corr/KingmaB14}
Diederik~P. Kingma and Jimmy Ba.
\newblock Adam: {A} method for stochastic optimization.
\newblock In \emph{3rd International Conference on Learning Representations,
  {ICLR} 2015, San Diego, CA, USA, May 7-9, 2015, Conference Track
  Proceedings}, 2015.
\newblock \doi{10.1080/10556788.2019.1649672}.

\bibitem[Buckley and LeNir(1985)]{buckley1985algorithm}
Albert Buckley and A~LeNir.
\newblock Algorithm 630: Bbvscg--a variable-storage algorithm for function
  minimization.
\newblock \emph{ACM Transactions on Mathematical Software (TOMS)}, 11\penalty0
  (2):\penalty0 103--119, 1985.
\newblock \doi{10.1145/363219.363231}.

\bibitem[LeCun and Cortes(2010)]{lecun2010}
Yann LeCun and Corinna Cortes.
\newblock Mnist handwritten digit database.
\newblock 2010.
\newblock URL \url{http://yann.lecun.com/exdb/mnist/}.

\bibitem[Dua and Graff(2017)]{Dua:2019}
Dheeru Dua and Casey Graff.
\newblock {UCI} machine learning repository, 2017.
\newblock URL \url{http://archive.ics.uci.edu/ml}.

\bibitem[Cong et~al.(2019)Cong, Choi, and Lukin]{cong2019quantum}
Iris Cong, Soonwon Choi, and Mikhail~D Lukin.
\newblock Quantum convolutional neural networks.
\newblock \emph{Nature Physics}, 15\penalty0 (12):\penalty0 1273--1278, 2019.
\newblock \doi{10.1038/s41567-019-0648-8}.

\bibitem[Hahn(1927)]{hahn1927lineare}
Hans Hahn.
\newblock {\"U}ber lineare gleichungssysteme in linearen r{\"a}umen.
\newblock \emph{Journal f{\"u}r die reine und angewandte Mathematik},
  1927\penalty0 (157):\penalty0 214--229, 1927.
\newblock \doi{https://doi.org/10.1515/crll.1927.157.214}.

\bibitem[Fr{\'e}chet(1907)]{frechet1907ensembles}
Maurice Fr{\'e}chet.
\newblock Sur les ensembles de fonctions et les op{\'e}rations lin{\'e}aires.
\newblock \emph{CR Acad. Sci. Paris}, 144:\penalty0 1414--1416, 1907.
\newblock \doi{https://doi.org/10.1090/S0002-9947-1904-1500687-0}.

\end{thebibliography}

\onecolumn\newpage
\appendix

\section{The proof of Theorem 1}\label{appendix1}
It suffices to show that $Q(\mathbb{S}_0)$ is dense in $L^2(\mathbb{S}_0)$. We assume that the closure $\overline{Q(\mathbb{S}_0)}\neq L^2(\mathbb{S}_0)$ and show the contradiction. By the Hahn-Banach theorem~\cite{hahn1927lineare}, there exists a bounded linear functional $\mathcal{L}$ of $L^2(\mathbb{S}_0)$ such that $\mathcal{L}(Q(\mathbb{S}_0))=0$ and $\mathcal{L}\neq 0$. 
By the Riesz representation theorem~\cite{frechet1907ensembles}, there exists a function $g(\bar{\boldsymbol{x}})\in L^2(\mathbb{S}_0)$ such that
    \begin{equation}
        \mathcal{L}(f)=\int_{\mathbb{S}_0}f(\bar{\boldsymbol{x}})g(\bar{\boldsymbol{x}})d\mu(\bar{\boldsymbol{x}})\quad \text{for all} f\in L^2(\mathbb{S}_0)
    \end{equation}
    where $\mathcal{L} \neq 0$ implies that $g(\bar{\boldsymbol{x}}) \neq 0$. Since $\mathcal{L}(Q(\mathbb{S}_0))=0$, we have
    \begin{equation}
        \int_{\mathbb{S}_0} \sigma(a(|\langle\bar{\boldsymbol{x}}|\boldsymbol{\xi}\rangle|^2-c))g(\bar{\boldsymbol{x}})d\mu(\bar{\boldsymbol{x}})=0 
    \end{equation}
    In particular, there exists an open subset $E \subset \mathbb{S}_0$ with the measure $\mu(E)>0$ such that $g(\bar{\boldsymbol{x}}) \neq 0$ in $E$. Without loss of generality, we assume $g(\bar{\boldsymbol{x}}) \geq k>0$ in $E$. Since $E$ is open, there exists a small ball $B(\boldsymbol{\xi}^*,\delta)=\{\bar{\boldsymbol{x}}:|\bar{\boldsymbol{x}}-\boldsymbol{\xi}^*|<\delta\}\subset E$. 
    
For $\boldsymbol{\xi}^*$, $\bar{\boldsymbol{x}}\in \mathbb{S}_{\mathbb{R}^N}$, we have 
    \begin{equation}
        |\bar{\boldsymbol{x}}-\boldsymbol{\xi}^*|^2 =2-2\langle\bar{\boldsymbol{x}}|\boldsymbol{\xi}^*\rangle.  
    \end{equation}
Therefore $|\langle\bar{\boldsymbol{x}}|\boldsymbol{\xi}^*\rangle|^2>c$ is equivalent to 
    \begin{equation}\label{Eq:x-xi-x+xi}
        |\bar{\boldsymbol{x}}-\boldsymbol{\xi}^*|^2<2(1-\sqrt{c})\quad\text{or}\quad|\bar{\boldsymbol{x}}-\boldsymbol{\xi}^*|^2>2(1+\sqrt{c}).
    \end{equation}
We claim that $|\bar{\boldsymbol{x}}-\boldsymbol{\xi}^*|^2>2(1+\sqrt{c})$ is impossible to hold if $c$ is closed to $1$. 
Or else, by using
\begin{equation}
    |\bar{\boldsymbol{x}}-\boldsymbol{\xi}^*|^2+|\bar{\boldsymbol{x}}+\boldsymbol{\xi}^*|^2=4
\end{equation} 
we find that $|\bar{\boldsymbol{x}}+\boldsymbol{\xi}^*|^2<2(1-\sqrt{c})$. From Eqn.~(\ref{Eq:x-d+1}) and $\boldsymbol{\xi}^*$, $\bar{\boldsymbol{x}}\in \mathbb{S}_0$, we see that $|\bar{\boldsymbol{x}}+\boldsymbol{\xi}^*|^2\geq (\xi^*_{d+1}$ + $\bar{x}_{d+1})^2\geq 4(1+(1+\kappa_2)^2)^{-1}$. 
Hence for $c>(1-2(1+(1+\kappa_2)^2)^{-1})^2$, the latter case of Eqn.~(\ref{Eq:x-xi-x+xi}) makes no sense, and $|\langle\bar{\boldsymbol{x}}|\boldsymbol{\xi}^*\rangle|^2>c$ is only equivalent to 
\begin{equation}
    |\bar{\boldsymbol{x}}-\boldsymbol{\xi}^*|^2<2(1-\sqrt{c})\quad\forall \boldsymbol{\xi}^*,\bar{\boldsymbol{x}}\in\mathbb{S}_0.
\end{equation}
Therefore, passing to the limit $a\rightarrow\infty$ we obtain 
\begin{equation}
    \sigma(a(|\langle\bar{\boldsymbol{x}}|\boldsymbol{\xi}^*\rangle|^2-c))\rightarrow \left\{
    \begin{aligned}
        &1 \quad \forall \bar{\boldsymbol{x}}\in B(\boldsymbol{\xi}^*,\delta_1)\\
        &0 \quad \forall \bar{\boldsymbol{x}}\notin B(\boldsymbol{\xi}^*,\delta_1)
    \end{aligned}
\right.
\end{equation}
with $\delta_1=(2(1-\sqrt{c}))^{\frac{1}{2}}$. By taking $c$ sufficiently close to 1 such that $\delta_1\leq\delta$, and using Lebesgue dominate convergence theorem, from Eqn.(B13) we obtain
    
\begin{align}
    0&=\int_{\mathbb{S}_0} \sigma(a(|\langle\bar{\boldsymbol{x}}|\boldsymbol{\xi}\rangle|^2-c))g(\bar{\boldsymbol{x}})d\mu(\bar{\boldsymbol{x}})\\
    &\geq k\int_{B(\boldsymbol{\xi}^*,\delta_1)}\sigma(a(|\langle\bar{\boldsymbol{x}}|\boldsymbol{\xi}\rangle|^2-c))d\mu(\bar{\boldsymbol{x}})\\
    &\rightarrow k\int_{B(\boldsymbol{\xi}^*,\delta_1)}1d\mu(\bar{\boldsymbol{x}})=k\mu(B(\boldsymbol{\xi}^*,\delta_1))>0
\end{align}
 which comes out a contradiction. Hence, we conclude that $Q(\mathbb{S}_0)$ is dense in $L^2(\mathbb{S}_0)$. Thus for any $f\in L^2(\mathbb{S}_0)$ and $\epsilon>0$, we can find a $q(\bar{\boldsymbol{x}}) \in Q(\mathbb{S}_0)$ such that
\begin{equation}
    ||f-q(\bar{\boldsymbol{x}})||^2_{L^2(\mathbb{S}_0)}=\int_{\mathbb{S}_0}|f(\bar{\boldsymbol{x}})-q(\bar{\boldsymbol{x}})|^2d\mu(\bar{\boldsymbol{x}})\leq \epsilon
\end{equation}
which proves the theorem.

\end{document}